\newtheorem{theorem}{Theorem}[section]
\newtheorem{lemma}[theorem]{Lemma}
\title{The Shapley Value of Cyclic Digraph Games}
\author{Krishna Khatri \footnote{Department of Mathematics, Piedmont College, Demorest, GA. I am grateful to my advisor, Dr. Douglas Torrance, and math department of Piedmont College.}}
\begin{document}
 % Add the title section.
\maketitle

% Add an abstract.
\abstract
{
In this paper the Shapley value of digraph (directed graph) games are considered. Digraph games are transferable utility (TU) games with limited cooperation among players, where players are represented by nodes. A restrictive relation between two adjacent players is established by a directed line segment. Directed path, connecting the initial player with the terminal player, form the coalition among players. A dominance relation is established between players and this relation determines whether or not a player wants to cooperate. To cooperate, we assume that a player joins a coalition where he/she is not dominated by any other players. The Shapley value \cite{khmelnitskaya2016shapley} is defined as the average of marginal contribution vectors corresponding to all permutations that do not violate the subordination of players. The Shapley value for cyclic digraph games is calculated and analyzed. For a given family of characteristic functions, a quick way to calculate Shapley values is formulated. 
}

\textbf{Keywords:} Cooperative game, TU game, Shapley value, digraph, domination
\section{Introduction}
\label{introduction}
Game theory is the mathematical theory that studies the conflict and cooperation 
between rational decision makers. Game theory helps to analyze decision making between two or more individuals who influence one another's welfare  \cite{myerson2013game}.
Cooperative game theory deals with coalitions and allocations, and considers group of players willing to allocate the joint benefits derived from their cooperation  \cite{gonzalez2010introductory}.

When the players in a game form a coalition to work together, it is essential to identify the correct way to distribute the profit among themselves. If some of the players in the coalition are unsatisfied with the proposed allocation, then they are free to leave the coalition. In stable coalitions there are fewer incentives to leave the coalition. The Shapley value provides a unique way to divide a payoff among players in such a way as to satisfy various fairness criteria. Distributing payoff to all players according to their Shapley value helps to create a stable coalition. Myerson considers the cooperation between players in an undirected graph, where each player has an equal chance to move away from a coalition by breaking the path between them  \cite{myerson1977graphs}. Such games assume fair and equal gain through cooperation. 

This paper is motivated by the paper ``The Shapley value for directed graph games" of Anna Khmelnitskaya, Ozer Selcuk and Dolf Talman. They introduce the Shapley value for digraph games and look for its stability \cite{khmelnitskaya2016shapley}.

As the structure of this paper, digraph games and the Shapley value are defined in section $2$ and the following theorem is proved in section $3$. 
\begin{theorem} \label{thm:neat}
Consider $f:\mathbb{Z}_{\geq 0}\rightarrow \mathbb{R}$ with $f(0)=0$. Suppose $N=\{1,...,n\}$ and define $v_f : 2^N \rightarrow \mathbb{R}$ by $v_f (S)= f(|S|)$. Let $\Gamma$ be the directed cycle $(1,2,...,n,1)$. Then the Shapley value of the digraph game $(v_f,\Gamma)$ is $$Sh(v_f,\Gamma)=\left( \smash[b]{\! \underbrace{\dfrac{f(n)}{n},\dfrac{f(n)}{n},\dfrac{f(n)}{n},\cdots, \dfrac{f(n)}{n}\,}_\text{n times}}\right).$$

\end{theorem}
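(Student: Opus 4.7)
The plan is to bypass any direct enumeration of admissible permutations by invoking just two abstract properties that the Shapley value of a digraph game inherits from its definition as an average of marginal contribution vectors, namely \emph{efficiency} and \emph{equivariance} under automorphisms of $\Gamma$.

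First I would establish efficiency. For any single permutation $\pi = (\pi_1, \dots, \pi_n)$, the sum of marginal contributions telescopes to $v_f(N) - v_f(\emptyset) = f(n) - 0 = f(n)$. Averaging this identity over the (nonempty) set of admissible permutations, I obtain $\sum_{i=1}^n Sh_i(v_f, \Gamma) = f(n)$.

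Next I would exploit the rotation $\sigma \colon i \mapsto (i \bmod n) + 1$, which is an automorphism of $\Gamma$ because it carries each edge $i \to i+1$ (and $n \to 1$) to another edge of the cycle. Since $v_f(S)$ depends only on $|S|$, we have $v_f(\sigma(S)) = v_f(S)$, so $v_f$ is $\sigma$-invariant. Because admissibility of a permutation depends only on the digraph structure, the map $\pi \mapsto \sigma \circ \pi$ is a bijection of the admissible set to itself, and it carries the marginal contribution of player $i$ in $\pi$ to the marginal contribution of $\sigma(i)$ in $\sigma \circ \pi$. Averaging over admissible permutations then yields $Sh_{\sigma(i)}(v_f, \Gamma) = Sh_i(v_f, \Gamma)$. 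Since $\sigma$ acts transitively on $N$, all coordinates of the Shapley value coincide, and the efficiency identity pins the common value at $f(n)/n$, which is the claim.

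The main obstacle is purely definitional: working from the precise formulation of admissibility that Section 2 supplies, I have to verify (a) that at least one admissible permutation exists, so that the average defining the Shapley value is well-posed, and (b) that precomposition with $\sigma$ really does send admissible permutations to admissible permutations. Both should fall out immediately once the definitions are unpacked, but they carry the only content in the argument; everything else is symmetry plus a telescoping sum.
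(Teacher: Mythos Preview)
Your argument is correct and genuinely different from the paper's. The paper first proves a lemma identifying the consistent permutations for the $n$-cycle as exactly the $n$ cyclic rotations $\pi_k=(-1+k,-2+k,\dots,1+k,k)$, and then for a fixed player $i$ observes that over these $n$ permutations the quantity $|\bar P_\pi(i)|$ runs once through $\{1,\dots,n\}$, so $\sum_\pi \bar m^{v_f}_i(\pi)=\sum_{j=1}^n\bigl(f(j)-f(j-1)\bigr)=f(n)$, giving $f(n)/n$ after dividing by $|\Pi^\Gamma|=n$. Your route avoids this enumeration entirely: efficiency (a global telescoping sum for each fixed $\pi$) plus equivariance under the rotation $\sigma$ force all coordinates to be equal and to sum to $f(n)$. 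The paper's approach yields more information, namely the explicit description of $\Pi^\Gamma$, which is of independent interest; your approach is shorter and strictly more general, since it applies verbatim to any digraph whose automorphism group acts transitively on the vertices paired with any symmetric characteristic function. One notational quibble: with the paper's convention that $\pi(i)$ is the \emph{position} of player $i$, the bijection on $\Pi^\Gamma$ you want is $\pi\mapsto\pi\circ\sigma^{-1}$ rather than $\sigma\circ\pi$; this is the map that satisfies $\bar P_{\pi\circ\sigma^{-1}}(\sigma(i))=\sigma\bigl(\bar P_\pi(i)\bigr)$ and hence transports marginal contributions as you claim. For obstacle~(a), any of the $\pi_k$ above serves as the required witness, and one such can be checked by hand without invoking the full lemma.
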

\vspace{5mm}
Finally in section $4$, Shapley values of various directed cycle games are calculated. 

\section{Preliminaries}
A \textit{cooperative transferable utility (TU) game} is a pair $(N,v)$, where $N=\{1,...,n\}$ is a finite set of players with $n \geq 2$ and $v:2^N \rightarrow \mathbb{R}$. We interpret $v(S)$ as the payoff that the coalition $S\subseteq N$ can generate. By convention, the payoff of an empty coalition is zero \textit{i.e.} $v(\O)=0$. We refer to the set of all TU-games with fixed set of players $N$ as $\mathbb{G}^N$. For simplicity we use $v$ to refer to $(N,v)$. For any player $i\in N$, player $i$'s minimum payoff which he can guarantee to himself without joining any coalition is $v(\{i\})$ . 

A \textit{digraph} is a tuple $\Gamma=(N,\tau)$ where $N$ is a finite set of players and $\tau$ is a set of directed edges. A \textit{subgraph H} of $\Gamma$ is a digraph whose sets of players and directed edges are subsets of $N$ and $\tau$, respectively. The restriction of a digraph $\Gamma$ to a coalition $S$ is denoted by $\Gamma|_S$. A \textit{directed path} is a sequence $(k_1,k_2,...,k_m)$ of players such that the directed edge $(k_1,k_{i+1})$ is in $\tau$ for all $i$. A \textit{directed cycle} of players is a directed path with $k_m=k_1$. A player $j$ is \textit{successor} of player $i$ if there exists a directed path from $i\in N$ to $j\in N$ in $\Gamma$. For $i\in N$, $S^\Gamma (i)$  denotes the set of successors of $i$ in $\Gamma$ and $\bar{S}^\Gamma (i)=S^\Gamma (i)\cup \{i\}$.
For digraph $\Gamma$ and $S\subseteq N$, player $i\in S$ \textit{dominates} player $j\in S$ in $\Gamma|_S$ if $j\in S^{\Gamma|_S}(i)$ and $i\notin S^{\Gamma|_S}(j) $. When a player does not have any  predecessors, then he or she is \textit{undominated}. No player is dominated on directed cycle.

A \textit{digraph game} is a pair $(v, \Gamma)$ of a TU-game $v\in \mathbb{G}^N$ and a digraph $\Gamma$. A permutation  $\pi \in \Pi$ is \textit{consistent} in $\Gamma $ if it preserves the subordination of players determined by $\Gamma, \textit{i.e.}$, $j \succ _{\Gamma|_{\bar{P}_\pi(i)}} i$ implies $\pi(j) > \pi(i)$ .   

 \textit{The marginal contribution} of player $i\in N$ to the coalitions in a game $v\in \mathbb{G}^N$ is given by $m_i ^v(S)=v(S\cup {i})-v(S)$. For any $i\in N$ and permutation $\pi: N\rightarrow N$, $\pi(i)$ is the position of player $i$ in $\pi$. A player $i$ is a \textit{predecessor} of player $j$ in $\Gamma$ if there exists a directed path from $i$ to $j$ in $\Gamma$. The set of predecessors of $i$ in $\pi$ is denoted $P_\pi (i)$ and $\bar{P}_\pi (i)=P_\pi (i) \cup \{i\}$. For any $i\in N$ on a TU game, the marginal contribution vector $ \bar{m}^v (\pi)\in \mathbb{R}^N$ is $\bar{m}_i^v (\pi)= m_i^v (P_\pi (i))=v(\bar{P}_\pi (i))-v(P_\pi (i))$. 
 
 \textit{The Shapley value} of a TU game is $$Sh(v,\Gamma)=\sum_{\pi \in \Pi} \dfrac{\bar{m}^v (\pi)}{|\Pi^\Gamma|},$$ where $\Pi^\Gamma$ is the set of all permutations on $N$ which are consistent with $\Gamma$. In \textit{the grand coalition} $N$, players divide $v(S)$ among themselves. The outcome of this division depends on the power structure in the grand coalition. The Shapley value provides a fair way to distribute $v(N)$ among themselves \cite{gonzalez2010introductory}.

\section{Results}
\begin{lemma}
Suppose $\Gamma$ is the cyclic digraph $(1, 2, ..., n, 1)$. The only permutations which are consistent with $\Gamma$ are $\pi_k = (-1 + k, -2 + k, ..., 1 + k, k)$, where addition is modulo $k$, for each $k = 1, ..., n$.
\end{lemma}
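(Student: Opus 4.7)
The plan is to reduce the definition of consistency to a short chain of inequalities on $\pi$, by first pinning down exactly when a player can be undominated in a restriction of the cycle.

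First I would establish the following reformulation: for any $S \subseteq N$ and $i \in S$, player $i$ is undominated in $\Gamma|_S$ if and only if either $S = N$ or $i - 1 \pmod{n} \notin S$. The only edge of $\Gamma$ entering $i$ is $(i-1, i)$, so whenever $i - 1 \notin S$ the vertex $i$ has no incoming edges in $\Gamma|_S$ and is automatically undominated. When $i - 1 \in S$ but $S \neq N$, the subgraph $\Gamma|_S$ is a disjoint union of directed paths, hence acyclic, and the edge $(i-1, i)$ witnesses that $i - 1$ dominates $i$. When $S = N$ the full cycle provides a return path from $i$ to every other vertex, so no domination relation holds --- as already noted in the preliminaries.

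Next I would translate the definition of consistency. Because $j \succ_{\Gamma|_{\bar{P}_\pi(i)}} i$ forces $j \in P_\pi(i)$ and hence $\pi(j) < \pi(i)$, the implication ``$j \succ_{\Gamma|_{\bar{P}_\pi(i)}} i \Rightarrow \pi(j) > \pi(i)$'' is equivalent to asking that no such $j$ exists, i.e.\ that $i$ be undominated in $\Gamma|_{\bar{P}_\pi(i)}$ for every $i \in N$. Combining this with the first step, $\pi$ is consistent if and only if for every $i \in N$ either $\pi(i) = n$ or $\pi(i - 1) > \pi(i)$.

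Finally, let $k$ be the unique player with $\pi(k) = n$. Applying the inequality $\pi(i-1) > \pi(i)$ to $i = k+1, k+2, \ldots, k-1 \pmod{n}$ chains into
\[
\pi(k) > \pi(k+1) > \pi(k+2) > \cdots > \pi(k-1),
\]
a strictly decreasing sequence of $n$ distinct values from $\{1,\ldots,n\}$. This forces $\pi(k+j) = n - j$ for $j = 0, 1, \ldots, n-1$, which is exactly $\pi_k$. Conversely, each $\pi_k$ obviously satisfies the characterizing inequalities, hence is consistent. The one place I expect mild care to be needed is the first step, particularly separating the $S = N$ case from the acyclic case $S \neq N$ in the undominated characterization; once that is clear, the rest is a short chain argument.
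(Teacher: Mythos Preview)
Your proof is correct. The paper's argument reaches the same chain of inequalities but organizes the reduction differently: it fixes the player $k_n$ in the last position, notes that $\Gamma|_{\bar P_\pi(k_{n-1})}=\Gamma|_{N\setminus\{k_n\}}$ is the single directed path $1{+}k_n\to 2{+}k_n\to\cdots\to -1{+}k_n$, and reads the whole chain $\pi(1{+}k_n)>\pi(2{+}k_n)>\cdots>\pi(-1{+}k_n)$ off that one restriction in one stroke. You instead first isolate a clean characterization of undominatedness in an arbitrary $\Gamma|_S$, convert consistency into the local condition $\pi(i-1)>\pi(i)$ for each $i\ne k$, and then chain those local inequalities. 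The destinations coincide; your route is a bit more systematic, handles the special case $S=N$ explicitly, and also verifies the converse direction that the paper leaves implicit.
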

\begin{proof}
\begin{figure}[!ht]
\centering
\begin{tikzpicture}[shorten >=1pt, auto, node distance=3cm, scale=0.5, ultra thick]
    \tikzstyle{node_style} = [circle,draw=black,font=\sffamily\Large\bfseries]
    \tikzstyle{edge_style} = [draw=black, line width=2, ultra thick]
    \node[node_style,label=$1+k_n$]  (v1) at (0,-1) {};
    \node[node_style,label=$n$] (v2) at (0,3) {};
    \node[node_style,label=$1$] (v3) at (3,5) {};
    \node[node_style,label=$2$] (v4) at (6,3) {};
    \node[node_style,label=$-1+k_n$] (v5) at (6,-1) {};
    \node[node_style,label=$k_n$] (v6) at (3,-3) {};
    \draw[dashed,->]  (v1) edge (v2);
    \draw[->]  (v2) edge (v3);
    \draw [->] (v3) edge (v4);
    \draw[dashed,->]  (v4) edge (v5);
    \draw[->]  (v5) edge (v6);
    \draw[->]  (v6) edge (v1);
    
    \end{tikzpicture}
    \caption{$\Gamma$}
\end{figure}
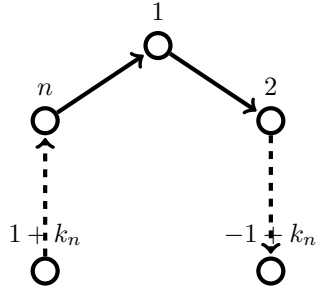
\begin{figure}[!ht]
\centering
\begin{tikzpicture}[shorten >=1pt, auto, node distance=3cm, scale=0.5, ultra thick]
    \tikzstyle{node_style} = [circle,draw=black,font=\sffamily\Large\bfseries]
    \tikzstyle{edge_style} = [draw=black, line width=2, ultra thick]
    \node[node_style,label=$1+k_n$]  (v1) at (0,-1) {};
    \node[node_style,label=$n$] (v2) at (0,3) {};
    \node[node_style,label=$1$] (v3) at (3,5) {};
    \node[node_style,label=$2$] (v4) at (6,3) {};
    \node[node_style,label=$-1+k_n$] (v5) at (6,-1) {};
    
    \draw[dashed,->]  (v1) edge (v2);
    \draw[->]  (v2) edge (v3);
    \draw [->] (v3) edge (v4);
    \draw[dashed,->]  (v4) edge (v5);

    \end{tikzpicture}
    \caption{$\Gamma|_{\bar{P}_{\pi}(k_{n-1})}$}
\end{figure}

Suppose $\pi=(k_1,k_2,...,k_{n-1},k_n)$ is consistent, so $\bar{P}_\pi (k_{n-1})=\{k_1,k_2,...,k_{n-1}\}$. By removing $k_n$ from $\Gamma$ in Figure $1$, $\Gamma|_{\bar{P}_{\pi}(k_{n-1})}$ is the directed path $(1+k_n, 2+k_n,...,-1+k_n)$ as shown in Figure $2$. So $1+k_n \succ 2+k_n \succ ... \succ -1+k_n$. Thus by the definition of consistency, $\pi(1+k_n) > \pi (2+k_n) > ... > \pi(-1+k_n)$. It follows that $k_{n-1}=1+k_n, k_{n-2}=2+k_n,...,-1+k_n=k_1$. Hence, $\pi = \pi_{k_{n}}$

\end{proof}

\begin{proof}[Proof of Theorem \ref{thm:neat}]
Let $\Gamma$ be the directed cycle with  a characteristic function $v_f (S)=f(|S|)$, where $S\subseteq N$ is a coalition of players.
 By Lemma 3.1, the number of players $n$ is equal to the number of permutations that are consistent with $\Gamma$.  We know that for any player $i$, the $i^{th}$ component of the Shapley value is  \begin{align*}
     Sh(v_f, \Gamma)_i &=\dfrac{1}{|\Pi^\Gamma|}\sum_{\pi \in \Pi^\Gamma }\bar{m}^v (\pi)\\
     &=\dfrac{1}{n}\sum_{\pi \in \Pi^\Gamma }\bar{m}^v(\pi)\\
     &=\dfrac{1}{n}\sum_{\pi \in \Pi^\Gamma }(v(\bar{P}_\pi (i))-v(P_\pi(i)))\\
     &=\dfrac{1}{n}\sum_{\pi \in \Pi^\Gamma }(v({P}_\pi (i))\cup v(i))-v(P_\pi (i))).
 \end{align*} We also know $P_\pi (i)=\{j\in N|\pi(j)<\pi(i)\}$. For each $k\in \{1,...,n\}$, there exists exactly one permutation $\pi$ for which $|\bar{P}_{\pi} (i)|=k$. Since $v_f (S)=f(|S|)$, the marginal contribution of player $i\in N$ is $\sum_{j=1^n} (f(j) - f(j-1)).$. This is equivalent to $\sum_{j=1}^n (f(j)-f(j-1))$. So,
\begin{align*}
Sh(v_f,\Gamma)_i & =\dfrac{\sum_{j=1}^n \left(f(j)-f(j-1)\right)}{n} \\
&= \dfrac{\left(f(n)-f(n-1)\right)+ \left(f(n-1)-f(n-2)\right)+ ...+ \left(f(2-1)-f(1-1)\right)}{n}&&
\end{align*}
 This is a telescoping sum $i.e.$ each term in the numerator cancels except the initial and final terms. Thus, Sh($V_f,\Gamma)_i= \dfrac{f(n) -f(0)}{n}=\dfrac{f(n)}{n}$.

\end{proof}

\section{Examples}
For any coalition $S\subseteq N$ and $k\in \mathbb{N}$, consider the characteristic function defined as  $v_k(S) = |S|^k$ on digraph $\Gamma$.

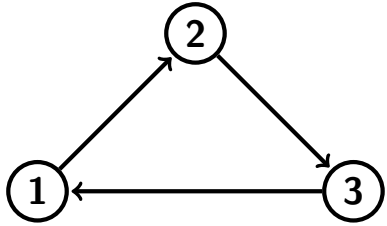
\begin{figure}[!ht]\
\centering
\begin{tikzpicture}[shorten >=1pt, auto, node distance=3cm,scale=0.7, ultra thick]
    \tikzstyle{node_style} = [circle,draw=black,font=\sffamily\Large\bfseries]
    \tikzstyle{edge_style} = [draw=black, line width=2, ultra thick]
    \node[node_style]  (v1) at (3,3) {2};
    \node[node_style] (v2) at (6,0) {3};
    \node[node_style] (v6) at (0,0) {1};
    \draw[->]  (v1) edge (v2);
    \draw[->]  (v2) edge (v6);
    
    \draw[->]  (v6) edge (v1);
    
    \end{tikzpicture}
    \caption{Digraph $\Gamma$}
\end{figure}

Consider a cyclic digraph game $(v,\Gamma)$ with three different players as shown in Figure $3$.  The set of all permutations that is consistent with $\Gamma$ is $\{(1,3,2),(2,1,3),(3,2,1)\}$. For $k=0,  Sh(v_0,\Gamma)=\left(\dfrac{1}{3},\dfrac{1}{3},\dfrac{1}{3}\right).$ For $k=1, Sh(v_1,\Gamma)=(1,1,1).$ For $k=2, Sh(v_2,\Gamma)=(3,3,3).$ For $k=3, Sh(v_3,\Gamma)=(9,9,9). $ For $k=4, Sh(v_4,\Gamma)=(27,27,27)$, and so on.\\

\begin{figure}[!ht]
\centering
\begin{tikzpicture}[shorten >=1pt, scale=0.7,auto, node distance=3cm, ultra thick]
    \tikzstyle{node_style} = [circle,draw=black,font=\sffamily\Large\bfseries]
    \tikzstyle{edge_style} = [draw=black, line width=2, ultra thick]
    \node[node_style]  (v1) at (3,3) {2};
    \node[node_style] (v2) at (6,0) {3};
    \node[node_style] (v3) at (3,-3) {4};
    \node[node_style] (v6) at (0,0) {1};
    \draw[->]  (v1) edge (v2);
    \draw[->]  (v2) edge (v3);
    \draw[->]  (v3) edge (v6);
    \draw[->]  (v6) edge (v1);
    
    \end{tikzpicture}
    \caption{Digraph $\Gamma'$}
\end{figure}

Again, consider a cyclic digraph game $(v,\Gamma')$ with four different players as shown in Figure $4$.  The set of all permutations that is consistent with $\Gamma'$ is $\{(1,4,3,2),(2,1,4,3),(3,2,1,4),(4,3,2,1)\}$.  For $k=0, Sh(v_0,\Gamma')=\left(\dfrac{1}{4},\dfrac{1}{4},\dfrac{1}{4},\dfrac{1}{4}\right).$ For $k=1, Sh(v_1,\Gamma')=(1,1,1,1).$ For $k=2, Sh(v_2,\Gamma')=(4,4,4,4).$ For $k=3, Sh(v_3,\Gamma')=(16,16,16,16). $ For $k=4, Sh(v_4,\Gamma')=(64,64,64,64),$ and so on.
\begin{figure}[!ht]
\centering
\begin{tikzpicture}[shorten >=1pt, auto, node distance=3cm, scale=0.5, ultra thick]
    \tikzstyle{node_style} = [circle,draw=black,font=\sffamily\Large\bfseries]
    \tikzstyle{edge_style} = [draw=black, line width=2, ultra thick]
    \node[node_style]  (v1) at (4,3) {2};
    \node[node_style] (v2) at (8,0) {3};
    \node[node_style] (v3) at (6,-4) {4};
    \node[node_style] (v4) at (2,-4) {5};
    \node[node_style] (v5) at (0,0) {1};
    \draw[->]  (v1) edge (v2);
    \draw[->]  (v2) edge (v3);
    \draw[->]  (v3) edge (v4);
    \draw[->]  (v4) edge (v5);
    \draw[->]  (v5) edge (v1);
    
    \end{tikzpicture}
    \caption{Digraph $\Gamma''$}
\end{figure}

As shown in Figure $5$, consider a cyclic digraph game $(v,\Gamma'')$ with five different players.  The set of all permutations that is consistent with $\Gamma''$ is $\{(1,5,4,3,2),(2,1,5,4,3),(3,2,1,5,4),(4,3,2,1,5),\\(5,4,3,2,1)\}$.  For $k=0, Sh(v_0,\Gamma'')=\left(\dfrac{1}{5},\dfrac{1}{5},\dfrac{1}{5},\dfrac{1}{5},\dfrac{1}{5}\right).$ For $k=1, Sh(v_1,\Gamma'')=(1,1,1,1,1).$ For $k=2, Sh(v_2,\Gamma'')=(5,5,5,5,5).$ For $k=3, Sh(v_3,\Gamma'')=(25,25,25,25,25). $ For $k=4, Sh(v_4,\Gamma'')=(125,125,125,125,125), $ and so on.
\newpage
\bibliographystyle{unsrt}
\bibliography{mybib}

\begin{thebibliography}{1}

\bibitem{khmelnitskaya2016shapley}
Anna Khmelnitskaya, {\"O}zer Sel{\c{c}}uk, and Dolf Talman.
\newblock The shapley value for directed graph games.
\newblock {\em Operations research letters}, 44(1):143--147, 2016.

\bibitem{myerson2013game}
Roger~B Myerson.
\newblock {\em Game theory}.
\newblock Harvard university press, 2013.

\bibitem{gonzalez2010introductory}
Julio Gonz{\'a}lez-D{\'\i}az, Ignacio Garc{\'\i}a-Jurado, and M~Gloria
  Fiestras-Janeiro.
\newblock {\em An introductory course on mathematical game theory}, volume 115.
\newblock American Mathematical Society Providence, 2010.

\bibitem{myerson1977graphs}
Roger~B Myerson.
\newblock Graphs and cooperation in games.
\newblock {\em Mathematics of operations research}, 2(3):225--229, 1977.

\end{thebibliography}

\end{document}